\theoremstyle{plain}
\newtheorem{theorem}{Theorem}[section]
\newtheorem{proposition}[theorem]{Proposition}
\theoremstyle{definition}
\newtheorem{definition}[theorem]{Definition}
\theoremstyle{remark}
\newcommand{\X}{\mathbb{X}}
\newcommand{\Xt}{\mathbb{X}(t)}
\newcommand{\Xtr}{\overline{\X}}
\newcommand{\G}{\mathcal{G}}
\newcommand{\Z}{\mathbb{Z}}
\newcommand{\C}{\mathbf{C}}
\newcommand{\twce}{\Lambda_{i \to j}^{\tau}}
\newcommand{\twcex}[3]{\Lambda_{#1 \to #2}^#3}
\newcommand{\twcee}{\hat{\Lambda}^{\tau}_{i \to j}}
\newcommand{\wvltce}{\Omega(s_i \to s_j)}
\newcommand{\tuamax}{\tau_{\mathrm{max}}}
\newcommand{\Do}{\mathrm{do}}
\newcommand{\Imp}{\mathbf{U}}
\newcommand{\Resp}{\mathbf{V}}
\newcommand{\imp}{\mathbf{u}}
\newcommand{\resp}{\mathbf{v}}
\newcommand{\Ep}{\mathbb{E}}
\newcommand{\Cov}{\mathrm{Cov}}% organize your images and other figures under media/ folder
\title{Causal inference for temporal patterns}
\author{
    Nicolas-Domenic Reiter \\
    German Aerospace Center\\
    Institute of Data Science\\
    07745 Jena, Germany\\
    \texttt{Nicolas-Domenic.Reiter@dlr.de} \\
  %% examples of more authors
    \And
    Andreas Gerhardus \\
    German Aerospace Center\\
    Institute of Data Science\\
    07745 Jena, Germany\\
    \texttt{Andreas.Gerhardus@dlr.de} \\
    \And
    Jakob Runge \\
    German Aerospace Center\\
    Institute of Data Science\\
    07745 Jena, Germany\\
    and \\
    Technische Universität Berlin\\
    10623 Berlin, Germany\\
    \texttt{Jakob.Runge@dlr.de} \\
%% \AND
  %% Coauthor \\
  %% Affiliation \\
  %% Address \\
  %% \texttt{email} \\
  %% \And
  %% Coauthor \\
  %% Affiliation \\
  %% Address \\
  %% \texttt{email} \\
  %% \And
  %% Coauthor \\
  %% Affiliation \\
  %% Address \\
  %% \texttt{email} \\
}
\begin{document}
\maketitle

\begin{abstract}	
	Complex dynamical systems are prevalent in many scientific disciplines. In the analysis of such systems two aspects are of particular interest: 1) the temporal patterns along which they evolve and 2) the underlying causal mechanisms. Time-series representations like discrete Fourier and wavelet transforms have been widely applied in order to obtain insights on the temporal structure of complex dynamical systems. Questions of cause and effect can be formalized in the causal inference framework. We propose an elementary and systematic approach to combine time-series representations with causal inference. Our method is based on a notion of causal effects from a cause on an effect process with respect to a pair of temporal patterns. In particular, our framework can be used to study causal effects in the frequency domain. We will see how our approach compares to the well known Granger Causality in the frequency domain. Furthermore, using a singular value decomposition we establish a representation of how one process drives another over a time-window of specified length in terms of temporal impulse-response patterns. To these we will refer to as Causal Orthogonal Functions (COF), a causal analogue of the temporal patterns derived with covariance-based multivariate Singular Spectrum Analysis (mSSA).
\end{abstract}

\section{Introduction}
\label{submission}
Complex dynamical systems are of interest in diverse fields such as neuroscience, climate science, biology, ecology or economics. One approach to gain understanding is to identify and extract spatio-temporal patterns in observational data \cite{townsend2018detection}. There are several data-driven approaches which aim to find such patterns in observational data. Examples are the well-known multivariate Singular Spectrum Analysis (mSSA) \cite{broomhead1986extracting}, \cite{golyandina2020particularities} and Dynamic Mode Decomposition \cite{schmid_2010}. Traditional approaches towards analyzing temporal patterns in time-series data are Fourier and wavelet theory. A discussion of these methods and its applications to climate data is given in \cite{ghil2002advanced} 

The second relevant aspect in the analysis of complex systems is to reveal the underlying causal mechanisms. A variety of techniques have been developed with the aim to uncover causal links form observational data. Some are discussed in \cite{runge2019inferring}. A formal framework to address questions of cause and effect has been developed by \cite{pearl2009causality}. The last decades have shown rapid progress in this direction. A like-wise famous notion of causality is due to \cite{granger1969investigating}. It has been developed for linear stochastic processes in the context of economic time-series analysis. Since its introduction it has seen further advances and has been applied and discussed in many scientific studies \cite{seth2015granger}, \cite{silva2021detecting}. Pearl's framework has been adapted to study causal interactions in discrete-time stochastic processes \cite{eichler2012causal}. The PCMCI Causal Discovery Algorithm \cite{runge2019detecting}, its extensions PCMCI+ \cite{Runge2020DiscoveringCA} and LPCMCI \cite{NEURIPS2020_94e70705} are based on this framework. Under the standard assumptions of constraint-based causal discovery they detect the causal mechanisms governing these processes. This novel method has already been applied in a number of studies, e.g. in the Earth Sciences to analyze causal interaction between the bio-sphere and atmosphere \cite{krich2020estimating} or to study so-called teleconnections in the climate system \cite{di2020dominant}.  

Aforementioned Granger Causality admits a Fourier representation that is commonly referred to as Granger Causality in frequency domain \cite{geweke1982measurement}. A recent approach combines time-series filtering and Granger Causality \cite{faes2017multiscale}. It provides means to analyze causal relations in stochastic processes at multiple time scales. 

Combining stochastic processes with Pearl's causal inference framework opens up a new opportunity towards analyzing emergent causal effects between temporal patterns in such systems.
With this study we seek to conceptually connect time-series decomposition with causal inference in stochastic processes. Here we address the following question: Given a cause and an effect process $\mathbf{X}(t)$ resp. $\mathbf{Y}(t)$, how susceptible is the amplitude of some temporal mode $\resp$ in $\mathbf{Y}(t)$ to perturbations of a temporal mode $\imp$'s amplitude in $\mathbf{X}(t)$? In Section \ref{sec: causal effect representation} we formalize this question as a causal inference problem. Thereby, our approach allows us to define and estimate causal effects with respect to any orthogonal time-series representation. Furthermore, using a singular value decomposition (SVD) we compute pairs of impulse-response modes, which compactly capture the causal influence of $\mathbf{X}(t)$ on $\mathbf{Y}(t)$. We call these pairs Causal Orthogonal Functions (COF). The construction is motivated by the already mentioned mSSA. While mSSA provides a set of combined temporal modes, which are pair-wise uncorrelated and efficiently represent the cross-covariance information of the underlying process, COFs compactly represent the directed influence of one process on another over a time-window of specified length.
In Section \ref{sec: estimation} we comment on estimation of these effects. Section \ref{sec: numerical examples} is devoted to examining several numerical examples. We empirically compare mSSA with COFs and, secondly, contrast Granger Causality in the frequency domain with a notion of frequency-resolved causal effects, which we develop in this work. 

\section{Structural Causal Processes and Causal Inference}\label{sec: causal processes}
\begin{definition}[Structural Causal Process]
	Let $V = \{X^1, \dots, X^{N}\}$ be a set of nodes. A time-series graph (TSG) $\mathcal{G}= (V\times \mathbb{Z}, E)$ is given by $\mathbb{Z}$-indexed copies of $V$ and a set of directed edges whose directions respect the temporal order, i.e. $(X^i_{s}, X^j_t) \in E $ only if $s < t$. The graph $\mathcal{G}$ is called causally stationary if from $(X^i_s, X^j_t) \in E$ it follows that $(X^i_{s+\tau}, X^j_{t+\tau})$ for every $\tau \in \mathbb{Z}$. A Structural Causal Process (SCP) $\X = \{ \Xt\}_{t \in \mathbb{Z}}$ associated to a TSG $\mathcal{G}$ parameterizes each node in $\mathcal{G}$ as a function of its parents in $\mathcal{G}$ and some noise term. In particular, a SCP is specified by a collection of functions $\{f^{i}\}$ and random variables $\{\epsilon^{i}_{t}\}$, where $i \in \{1, \cdots, N\}$ and $t \in \mathbb{Z}$. The random variables $\epsilon_{t}^i$ are assumed to be mutually and serially independent, i.e. $\epsilon_{t}^i$ and $\epsilon_{s}^j$ are independent if $i \neq j$ or $t \neq s$. Let $X^{i}_{t}$ be a node in $\mathcal{G}$, its value is determined by
	\begin{displaymath}
	X^{i}_{t} = f^{i}(\mathbf{pa}(X^{i}), \epsilon^{i}_t)).
	\end{displaymath} 
\end{definition}
If the SCP $\X$ is stationary, then it induces a probability distribution on the set of variables $V \times \mathbb{Z}$. Throughout this work we assume that all SCPs are stationary and that their marginal distributions on finite subsets of $V \times \mathbb{Z}$ have positive densities. Given a SCP $\X$, one can formalize interventions and thereby cause-effect relations between variables in $\X$ \cite{pearl2009causality}. Let $\mathbf{X} = \{X^{i_l}(t_l)\}_{l=1,\dots, L} \subset \mathbb{X}$ and $\mathbf{x}\in \mathbb{R}^L$, then an intervention on $\mathbf{X}$, denoted $\Do (\mathbf{X}) \coloneqq \mathbf{x}$, is defined by a new SCP $\X_{\Do (\mathbf{X}) \coloneqq \mathbf{x}}$ in which $X^{i_l}(t_l) \coloneqq x_l$, i.e. r.v. variable with point-mass distribution $\delta_{x_l}$. All other random variables in the system are defined by the same set of equations and noise variables as in $\X$.
The interventional distribution associated to $\Do(\mathbf{X})\coloneqq \mathbf{x}$ is the distribution of $\X_{\Do (\mathbf{X})\coloneqq \mathbf{x}}$. Graphically, an intervention on $\mathbf{X}$ corresponds to deleting all arrows pointing to any $X^{i_l}(t_l)\in\mathbf{X}$.

\section{Causal Effects and time-series representations}\label{sec: causal effect representation}
Let $\X$ be a SCP and $T$ some positive integer, the length of a time window. Then for each $i \in \{1, \cdots, N\}$ one can define the length $T$ trajectory process
\begin{equation}
\Xtr^i(t, T)
=  
\begin{pmatrix}
X^i(t-T+1) & \cdots & X^{i}(t)
\end{pmatrix}^T
\end{equation} 
where $t \in \Z$. Let us fix some distinct indices $1\leq i,j \leq N$, time-windows $T_i,T_j\geq1$ and some time lag $\tau\geq 0$. Furthermore we pick two orthogonal (or unitary) matrices
\begin{align*}
\Imp &= \begin{pmatrix}
\imp_1 & \cdots & \imp_{T_i}
\end{pmatrix} \in \mathbb{R}^{T_i\times T_i}\\
\Resp &= \begin{pmatrix}
\resp_1 & \cdots & \resp_{T_j}
\end{pmatrix} \in \mathbb{R}^{T_j \times T_j} 
\end{align*}
In practice, these could be associated to time-series decompositions, like a Discrete Wavelet Transform (DWT), Discrete Fourier Transform or Singular Spectrum Analysis (SSA). 
Representing the processes $\Xtr^i$ and $\Xtr^j$ with respect to these decompositions gives 
\begin{align}
\Xtr^i(t - \tau, T_i) &= \sum_{k = 1}^{T_i} \omega^{i}_{k}(t-\tau)\cdot \resp_k\\
\Xtr^j(t, T_j) &= \sum_{k = 1}^{T_j} \omega^{j}_{k}(t)\cdot \imp_k
\end{align} 
Note that each amplitude $\omega^i_k(t) = \langle \imp_k, \Xtr^i(t-\tau, T_i) \rangle$, resp. $\omega^j_l(t) = \langle \resp_l, \Xtr^j(t, T_j) \rangle$ is a random variable and therefore gives rise to a stochastic process. Assuming that the process $\mathbb{X}$ is stationary implies that the corresponding amplitude processes are stationary.
Since amplitudes $\omega^i_k(t-\tau)$ and $\omega^j_l(t)$ are possibly related through the underlying SCP, we address the question of how to define the expected causal effect of $\omega^i_k(t-\tau)$ on $\omega^j_l(t)$. Intuitively, this effect should capture the extent to which the distribution of amplitude $\omega^j_l(t)$ changes if we were to control the amplitude $\omega^i_k(t-\tau)$. These effects would allow us to study lagged causal effects in SCPs with respect to a pair of orthogonal time-series representations.

\begin{definition}[Causal effects between temporal patterns]
	Let $\X$ be a $N$-dimensional SCP which paramterizes a TSG $\G$. Let $\X^i$ and $\X^j$ be two sub-processes of $\X$, $T_i, T_j >0$ two time-window lengths and $\tau \geq0 $ some time lag. Choose some impulse signal $\imp \in \mathbb{R}^{T_i}$ and some response  $\resp \in \mathbb{R}^{T_j}$. The expected causal effect of $\Xtr^i(t-\tau, T_i) \to \Xtr^j(t, T_j)$ with respect to a pair of signals $(\imp, \resp)$ is defined as follows
	\begin{gather*}
	\twce(\mathbf{x})(\imp, \resp) = \\
	\frac{\partial}{\partial \alpha} \Ep(\langle \resp, \Xtr^j(t, T_{j})\rangle|\Do \Xtr^i(t-\tau, T_i) \coloneqq \alpha \imp + \mathbf{x})
	\end{gather*}
	Here $\mathbf{x} \in \mathbb{R}^{T_i}$ is some baseline signal which is perturbed in the above hypothetical experiment by the signal $\imp$.  
\end{definition}
Figure \ref{fig:causal effect time-series representation} graphically illustrates a causal effect with respect to a given pair of signals $(\imp, \resp)$.
\begin{figure}
    \centering{
    \includegraphics[width=8cm]{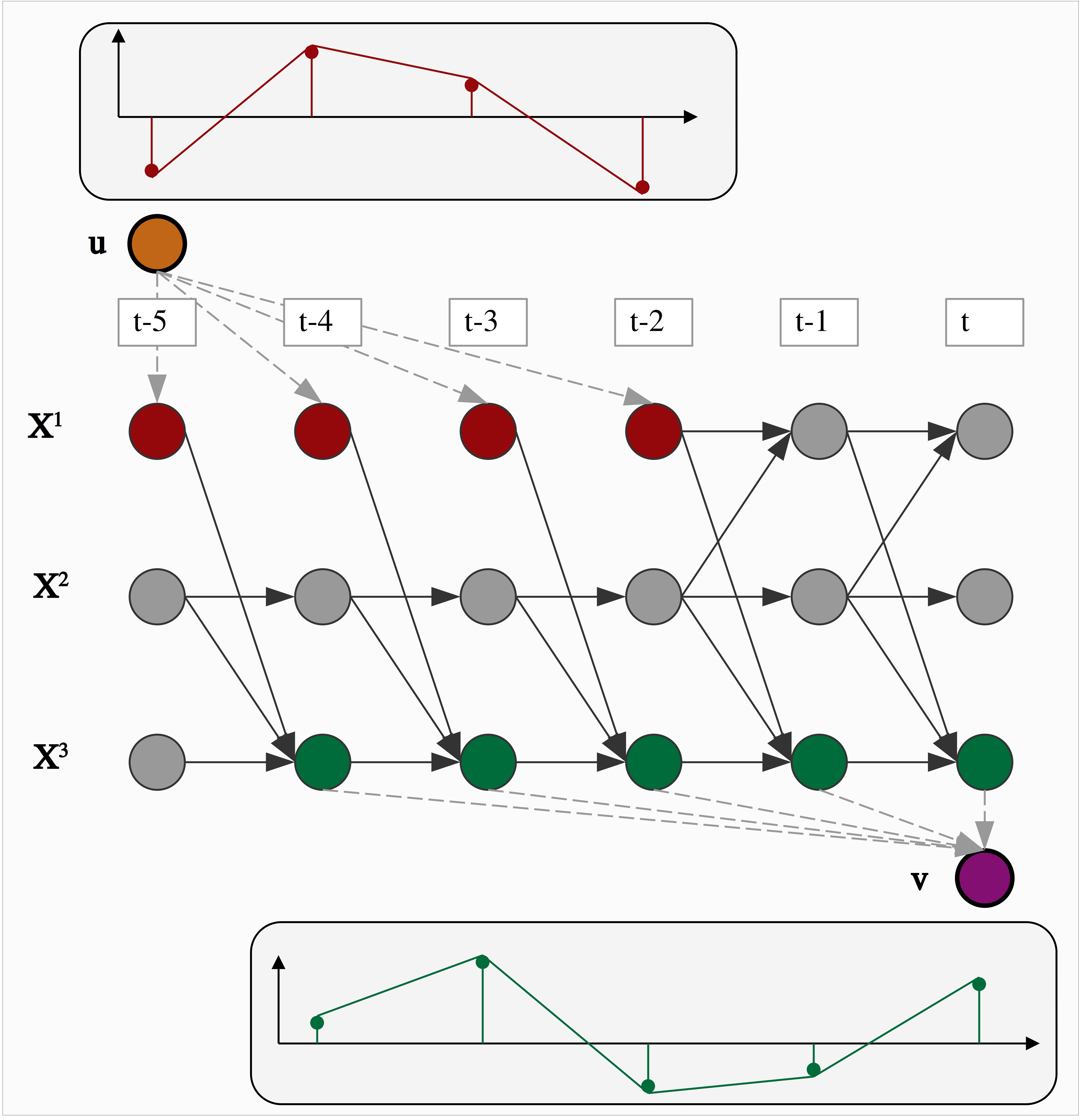}
	\caption{A graphical illustration of the causal effect $\Xtr^1(t-2, 4) \to \Xtr^3(t, 5)$ with respect to the pair of signals $(\imp, \resp)$. Note that there is no arrow pointing to a node in $\Xtr^1(t-2,4)$ (red nodes). That is because we jointly intervene on them. The value of the deterministic orange node represents the amplitude of the pattern $\imp$ (the signal above the time-series graph). In particular, the $k$-th gray-dotted arrow emerging from the orange node is weighted by $\imp_k$. Similarly, the purple colored node stores the amplitude of temporal pattern $\resp$ (represented by the signal below the time-series graph) and the gray-dotted arrows arriving there are defined analogously.}\label{fig:causal effect time-series representation}
    }
\end{figure}

\begin{definition}[Time-windowed causal effect]
	With the same notation as above, one may consider the following function
	\begin{gather*}
	\twce : \mathbb{R}^{T_i} \longrightarrow  \mathbb{R}^{T_j} \\
	\mathbf{x}  \mapsto  \Ep(\Xtr^j(t, T_j)| \Do (\Xtr^i(t-\tau, T_i)) \coloneqq \mathbf{x})
	\end{gather*}
	Assume that $\twce$ is differentiable, thus its Jacobian $D\twce(\mathbf{x})\in \mathbb{R}^{T_j \times T_i}$ can be formed. Its $(l,k)$-th entry is $\twce(\mathbf{x})(\mathbf{e}_k, \mathbf{e}_l)$, i.e. 
	\begin{gather*}
	D\twce(\mathbf{x})_{l,k} = \\
	\frac{\partial}{\partial \alpha} \Ep(X^j(t-T_j+l)|\Do (\Xtr^{i}(t-\tau, T_i)) \coloneqq \alpha \cdot \mathbf{e}_k + \mathbf{x}) 
	\end{gather*}
	In the following we will refer to the matrix $D\twce(\mathbf{x})$ as the time-windowed causal effect matrix (TWCE). It quantifies $\Xtr^i(t-\tau, T_i) \to \Xtr^j(t, T_j)$. 
\end{definition}
A graphical illustration of time-windowed causal effects is given in Figure \ref{fig:time windowed effects}. 
\begin{figure}[ht]
\begin{centering}
    \includegraphics[width=10cm]{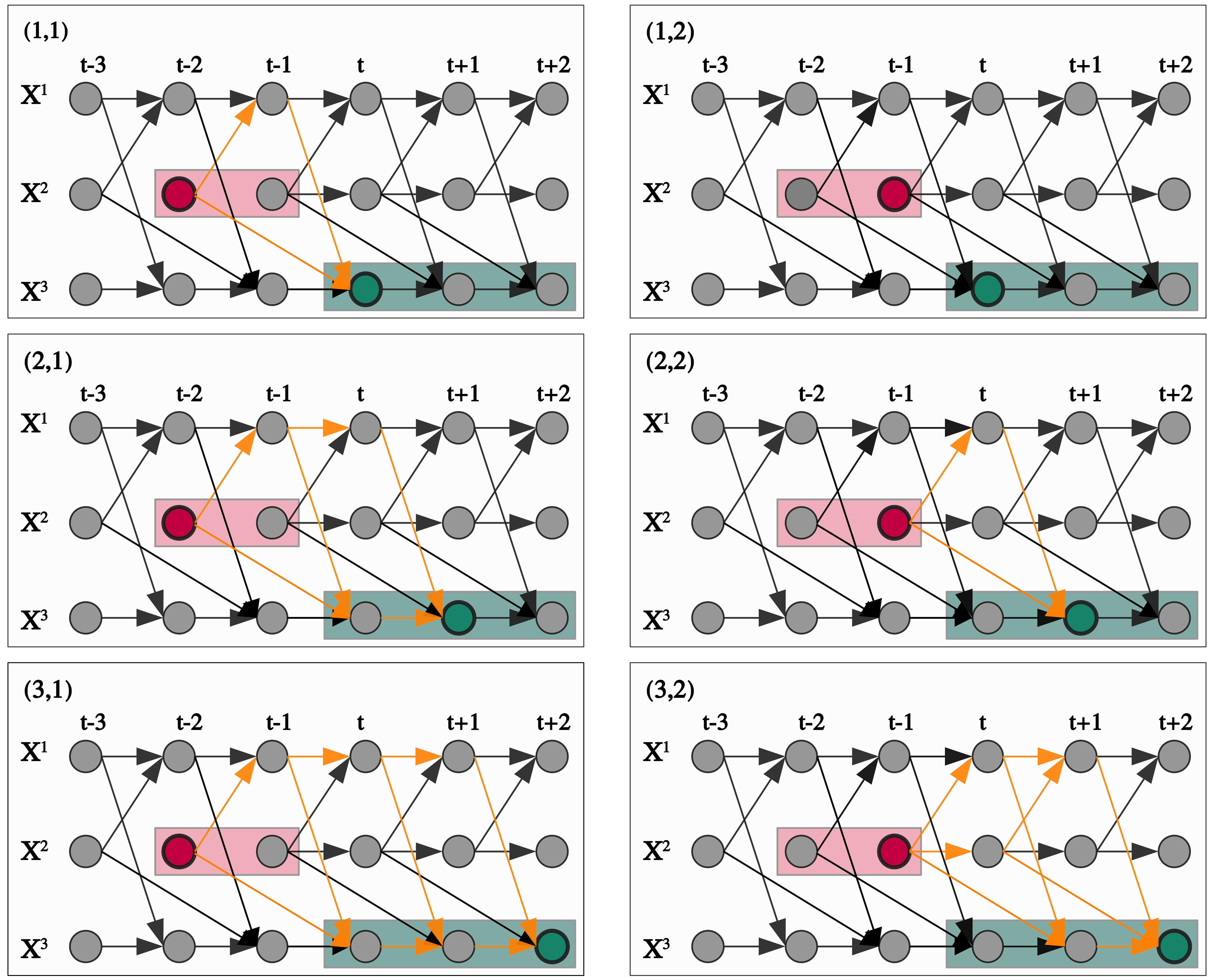}
	\caption{A graphical illustration of the causal effects of which $D\twcex{2}{3}{3}(\mathbf{x})$ is composed of. The graph $(k,l)$ displays which causal effect is quantified in the $l$-th row and $k$-th column of $D\twcex{2}{3}{3}(\mathbf{x})$. Specifically, the red window is the time-window on which we intervene, i.e., on every node lying in this window. Effects of these interventions are measured in the nodes surrounded by the green window. We measure how the expected value of the $l$-th node (green node) in the green window responds to varying the $k$-th component of $\mathbf{x}$ (red node). The effect of this perturbation propagates through the system along the orange arrows.} \label{fig:time windowed effects}
\end{centering}
\end{figure}
The notion of TWCEs is accompanied with another perspective on causal effects with respect to a pair of signals $(\imp, \resp)$. Since both the differential and expectation operator are linear, the causal effect $\Xtr^i(t-\tau, T_i) \to \Xtr^j(t, T_j)$ with respect to time-series representation $\Resp$ on the response side and $\Imp$ on the impulse side admits the matrix factorization
\begin{equation}
(\twce(\mathbf{x})(\imp_k, \resp_l))_{l,k} = \Resp^T D\twce(\mathbf{x}) \Imp.
\end{equation}

\begin{definition}[(Filtered) causal response]
	Given an impulse $\imp \in \mathbb{R}^{T_i}$ and a response signal $\resp\in\mathbb{R}^{T_j}$ both of which normalized to euclidean length one. The causal response (CR) of impulse $\imp$ is $D\twce(\mathbf{x})\imp$. The filtered causal response (FRC) is $\twce(\mathbf{x})(\imp, \resp)\resp$. The causal effect (CE) is the euclidean length of CR. The filtered causal effect (FCE) is the value $\twce(\mathbf{x})(\imp, \resp)$. The relative causal discrepancy ratio
	\begin{equation}\label{eq: relative causal discrepancy}
	CD_{\Delta}(\imp, \resp) = \frac{\lVert \twce(\mathbf{x})(\imp, \resp) \cdot \resp - D\twce(\mathbf{x})\imp \rVert^2}{\lVert D\twce(\mathbf{x}) \imp \rVert^2} \in [0,1]
	\end{equation}
	quantifies how much of the CR of impulse $\imp$ is not captured by the signal $\resp$. On the other hand $1-CD_{\Delta}(\imp, \resp)$ quantifies the squared ratio between FCE and CE.    
\end{definition} 

Time-windowed causal effects allow to phrase any impulse response relation in structural causal processes. For instance, we could fix some impulse signal $\imp$ along which we perturb the cause process away from a base-line trajectory $\mathbf{x}$. On average the intervention will cause the mean interventional trajectory $\Ep(\Xtr^j(t, T_j)|\Do(\Xtr^i(t-\tau,T_i))\coloneqq\mathbf{x})$ to diverge along the CR pattern $D\twce(\mathbf{x})\imp$. Its length, the associated CE quantifies the effect process's sensitivity to the impulse $\imp$ under the intervention $\Do(\Xtr^i(t, T_i)) \coloneqq \mathbf{x}$. 

Or the reverse, along which pattern $\imp$ do we need to perturb the intervention $\Do(\Xtr^i(t, T_i)) \coloneqq \mathbf{x}$ so that $\Ep(\Xtr^j(t, T_j)|\Do(\Xtr^i(t-\tau,T_i))\coloneqq\mathbf{x})$ deviates in the direction of a given signal $\resp$? In general there is no or not a unique answer to this question, since the matrix $D\twce(\mathbf{x})$ might not be invertible. However, using its Moore Penrose pseudo inverse $D\twce(\mathbf{x})^+$ we can define an intervention $\Xtr^j(t, T_j) \coloneqq \mathbf{x}+ \tilde{\imp}$, where $\tilde{\imp} = D\twce(\mathbf{x})^+\resp$ such that: 1.) The distance between its CR and the desired response is minimal, i.e. $\lVert \resp -D\twce(\mathbf{x})\tilde{\imp} \rVert$ and 2.) the absolute value of its FCE, $\lvert \twce(\mathbf{x})(\tilde{\imp}, \resp)\rvert$ is maximal.

\section{Causal Orthogonal Functions}\label{sec: COFs}
In the previous section we provided a possibility to define causal effects with respect to different time-series representations. On the other hand, a SVD on the time-windowed causal effect matrix yields orthogonal time-series representations which efficiently describe the temporal structure of $\Xtr^j(t, T_j)$ driving the trajectory $\Xtr^i(t +\tau, T_i)$
\begin{equation}
\Sigma = \Resp^T D\twce(\mathbf{x}) \Imp.
\end{equation}
The columns of both $\Resp\in\mathbb{R}^{T_j \times T_j}$ and $\Imp \in \mathbb{R}^{T_i \times T_i}$ are orthonormal, and $k$ the rank of $D\twce(\mathbf{x})$. The matrix $\Sigma = diag(\sigma_1, \cdots, \sigma_k) \in \mathbb{R}^{T_i \times T_j}$ carries the singular values of $D\twce(\mathbf{x})$ on its diagonal. They are arranged such that $\sigma_1 \geq \sigma_2 \geq \cdots \geq \sigma_k \geq 0$. Hence, a SVD of the time-windowed causal effect matrix provides us with an orthonormal set of impulse signals, the columns of $\Imp = (\imp_1 \cdots \imp_k)$ and a set of orthonormal response signals, the columns of $\Resp=(\resp_1 \cdots \resp_k)$ such that
\begin{equation}
D\twce(\mathbf{x}) \imp_l = \sigma_l \resp_l = \twce(\imp_l, \resp_l) \resp_l 
\end{equation}
Accordingly, the pairs $\{(\imp_l, \resp_l)\}$ will be referred to as Causal Orthogonal Functions (COFs) and the $l$-th singular value of $D\twce(\mathbf{x})$ is the causal effect associated to the $l$-th pair of COFs. These provide an alternative representation of how one process drives another over time within a complex network of dynamically interacting systems. 

\paragraph{Multi time-scale COFs}
Wavelet Theory and its accompanied Multi Resolution Analysis provides a mathematical framework within which the dynamics of a function or time-series can be examined at a variety of time-scales \cite{percival2000wavelet}. The MRA of a time-series of length $T$, which is assumed to be a multiple of $2^J$, is specified by a so-called wavelet filter and a scaling filter. It gives rise to an orthogonal matrix $\mathcal{W} \in \mathbb{R}^{T\times T}$ which admits the following structure
\begin{displaymath}
\mathcal{W} = \begin{pmatrix}
\mathcal{W}_1 &
\cdots &
\mathcal{W}_J &
\mathcal{V}_J
\end{pmatrix} \in \mathbb{R}^{T\times T}.
\end{displaymath}
The procedure via which this matrix is obtained form the wavelet and scaling filter is described in Chapter 4 of \cite{percival2000wavelet}.The columns of $\mathcal{W}_j \in \mathbb{R}^{T \times T_j}$, where $T_j = \frac{T}{2^j}$, span the $j$-th time-scale. Intuitively, each time-scale characterizes a certain frequency band. The first time scale captures the high-frequency part of the time-series, circa the $[\frac{\pi}{2}, \pi)$ frequency band \cite{percival2000wavelet}. The higher time scales, detect decreasingly lower frequencies, i.e $[\frac{\pi}{2^j}, \frac{\pi}{2^{j-1}} )$. Consequently, the lowest frequency range, approximately $[0, \frac{\pi}{2^J} )$, is spanned by the columns of $\mathcal{V}_J$. If $X\in \mathbb{R}^T$ is a time-series, then its wavelet representation $\mathcal{W}^T X$ exhibits the time-scale specific behavior of $X$, with respect to the underlying wavelet and scaling filter of $\mathcal{W}$. If $\X$ is VAR process and $\twce$ as above, then
\begin{align}
\Omega(s_i \to s_j) &= \mathcal{W}_{s_j}^T D\twce(\mathbf{x}) \mathcal{W}_{s_i},
\end{align}
where $s_j, s_j \in \{1, \dots, J\}$, describes the intra-scale causal effects of $\Xtr^i(t-\tau, T)\to \Xtr^j(t, T)$ at scale $\tau_i$, if $s_i = s_j$, resp. the inter-scale $s_i \to s_j$ causal effects, if $s_i$ and $s_j$ differ. Performing a SVD on $\wvltce$ yields
\begin{equation}
\wvltce = \Resp'(s_i \to s_j)^T \Sigma'(s_i \to s_j) \Imp'(s_i \to s_j)
\end{equation}
Hence, the COFs confined to the intra- or inter-scale interaction are a set of pairs of signals $\{(\imp_k, \resp_k)\}$, where $\imp_k$ is a signal of time-scale $s_i$ and $\resp_k$ a signal of time-scale $s_j$. The effect matrix $D\twce(\mathbf{x})\mathcal{W}_{s_i}$ quantifies the causal influence of $\Xtr^i(t-\tau, T)$ on $\Xtr^j(t, T)$, where the interventions are required to be of time-scale $s_i$. On the other hand, $\mathcal{W}_{\tau_i}^TD\twce(\mathbf{x})$ quantifies the causal influence of $\Xtr^i(t-\tau, T_i)$ on the $s_i$-th time-scale on $\Xtr^j(t, T_j)$.  

\section{Estimation}\label{sec: estimation}
Given that we know the time-series graph $\mathcal{G}$ and the assumptions from section \ref{sec: causal processes}, the proposed causal effects are identifiable, i.e. the effects can be estimated from observational time-series data. This follows from Theorem 4.4.1 \cite{pearl2009causality}. The literature offers different approaches to the estimation of such effects. One option is effect estimation via adjustment sets, see e.g. \cite{perkovic2018complete} for an extensive treatment. It turns out that there are SCPs in which not every time-windowed causal effect can be estimated with this methodology, i.e. time-windowed causal effects are not generally adjustment-identifiable. Problems arise when so-called feedback loops occur, i.e. a causal pathway $X^i(t) \to \cdots \to X^i(t+r)$ which also goes through some other process than $\X^i$. But if the process is a linear VAR process, we have another option to conduct the discussed analysis of COFs. Then it suffices to estimate the VAR process and compute the time-windowed causal effect matrix using \ref{prop: linear effects var}. In particular, one then does not have to worry about aforementioned feedback loops, at least theoretically. Estimation of time-windowed causal effect matrices in VAR processes then involves three steps: 1.) we estimate its TSG $\hat{\mathcal{G}}$ (e.g. using PCMCI \cite{runge2019detecting}), based on which 2.) the coefficient of each link in $\hat{\mathcal{G}}$ is estimated, so that $\hat{\Phi}$ is our estimated VAR model. Finally, we evaluate the analytic expression described in \ref{prop: linear effects var} on the estimated VAR model $\hat{\Phi}$. This gives the estimate $D\twcee(\mathbf{0})$ on which we then perform a SVD in order to determine the pairs $\{(\hat{u}_k, \hat{v}_k)\}$ of COFs with associated causal effects $\{\hat{\sigma}_k\}$. The estimated impulse-response pairs could be assessed by computing the causal discrepancy measure $CD_{\Delta}(\hat{\imp}, \hat{\resp})$ on them, using the true effect matrix.  

\section{Numerical Examples}\label{sec: numerical examples}
In this section we illustrate time-windowed causal effects by examining a number of VAR processes. First, we empirically compare the temporal modes extracted by mSSA with the COF impulse response pairs. Secondly, a comparison between Granger Causality in the Frequency Domain and time-windowed causal effects in the frequency domain will be conducted. Recall, a VAR process is modeled as follows
\begin{equation}
\X(t) = \sum_{p =1}^{\tau_{\mathrm{max}}} \Phi(p)\X(t-p) + \eta(t).
\end{equation}  
\begin{figure}[ht]
\begin{centering}
    \includegraphics[width=10cm]{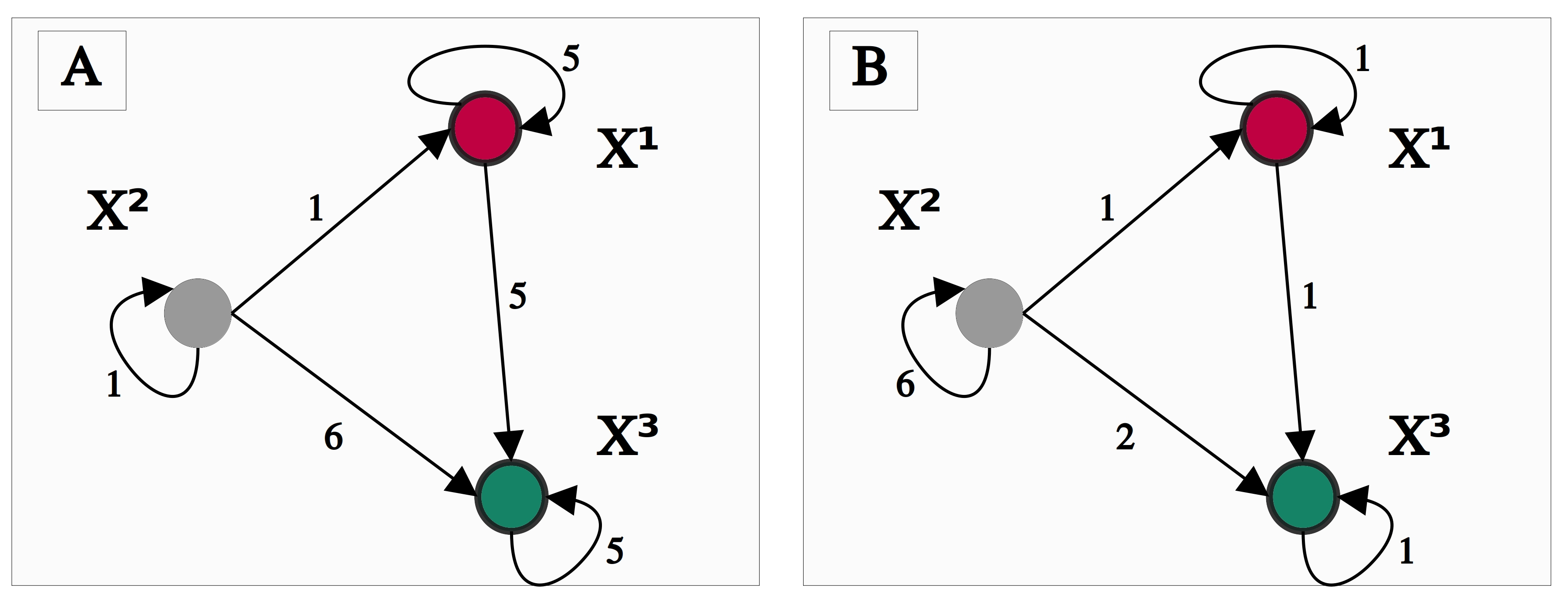}
	\caption{The summary graphs of two VAR processes are shown. The parameters for process $\mathbf{A}$ are chosen as follows: $\Phi^A(5)_{1,1}=\Phi^A(1)_{2,2}=\Phi^A(5)_{3,3} = 0.8$, $\Phi^A(5)_{3,1}=-0.3$, $\Phi^A(1)_{1,2} = \Phi^A(6)_{3,2}=0.7$. The process corresponding to graph $\mathbf{B}$ is defined by the following parameters: $\Phi^B(1)_{1,1} = \Phi^B(1)_{3,3} = \Phi^B(6)_{2,2} = 0.8, \Phi^B(1)_{1,2}(1) = \Phi^B(2)_{3,2}=0.7, \Phi^B(1)_{3,1}=-0.3$}
	\label{fig:example_var}
\end{centering}
\end{figure}
Each $\Phi(p)$ is a $N \times N$-matrix and encodes the lag $p$ dependencies in $\X$, $\eta(t)$ is a zero-mean Gaussian random vector, whose covariance is the identity matrix $\mathbb{I} \in \mathbb{R}^{N \times N}$. Each VAR process has an associated summary graph. Its vertices are the variables $\X$. There is an edge $\X^i \to \X^j$ annotated with number $p$ if the matrix entry $\Phi(p)_{j,i}$ is non-zero.

If $\X$ is VAR process, then the time-windowed causal effect matrix $D\twce(\mathbf{x})$ is independent of $\mathbf{x}$. Hence, in this section time-windowed causal effect matrices will simply denoted $\twce$.

For both of the VAR processes defined in Figure \ref{fig:example_var} the time-windowed causal effect matrix $\twcex{3}{1}{0}$ for time-window length $T=100$ is computed according to \ref{prop: linear effects var} in the Appendix. The coefficients are such that those processes are stable and thus stationary. Finally, on each of these matrices a SVD is performed, based on which the COFs are defined. In Figure \ref{fig: mssa_vs_cofs_1} and \ref{fig: mssa_vs_cofs} the first three COF modes are plotted for the VAR process $\mathbf{A}$, resp. $\mathbf{B}$ as given in Figure \ref{fig:example_var}. 
  
\subsection{Comparing MSSA modes with COF pairs}

Causal Orthogonal Functions are by construction similar to the well known mSSA. For some time-window $T$, mSSA extracts combined temporal patterns from the lagged covariance matrix $\mathbf{C}=(\mathbf{C}_{i,j})_{1\leq i,j\leq N}\in \mathbb{R}^{T\cdot N \times T\cdot N}$, where $\mathbf{C}_{i,j} = (\Cov(X^i(t-T+r), X^j(t-T+s)))_{1\leq r,s\leq T}\in\mathbb{R}^{T\times T}$. The extracted patterns arise from diagonalization, $\mathbf{D} = \mathbf{E}^T \mathbf{C} \mathbf{E}$, thus $\mathbf{D}$ is diagonal having only real non-negative entries and $\mathbf{E}$ is orthogonal. Its $k$-th column is a tuple of temporal modes $(\mathbf{e}_{1,k}, \cdots, \mathbf{e}_{N, k})$, where $\mathbf{e}_{j,k}\in\mathbb{R}^T$ expresses a temporal pattern in the process $\X^j$ over time-period $T$. The columns are assumed to be ordered such that the diagonal entries of $\mathbf{D}$ are non-decreasing. Hence, the combined patterns $\mathbf{e}_k$ and $\mathbf{e}_l$ are uncorrelated with respect to $\mathbf{C}$ if $k \neq l$ and otherwise their covariance with respect to $\mathbf{C}$ is given by $\mathbf{D}_{k,k}$. In particular, the first combined patterns are those which explain most of the cross-covariance information $\mathbf{C}$. If $\X$ is a VAR process, then the entries in $\mathbf{C}$ arise from the coefficients defining the VAR process. They quantify the association between $\X^i(t+r)$ and $\X^j(t)$. From a graphical point of view, all causal and confounding paths, connecting $\X^i$ and $\X^j$ are weighted by the coefficients of $\X$. On the other hand, the time-windowed causal effect matrix $\twce$ accumulates all causal paths, weighted by the coefficients of $\X$, going from $\X^i$ into and through $\X^j$ over time-window $T$, see Proposition \ref{prop: linear effects var} in th Appendix.

Motivated by the conceptual similarity between COF and mSSA, we ask the following question: Do the combined mSSA modes implicitly encode the process's causal-temporal structure? Specifically, for the time-windowed causal effect $\Xtr^i(t,T)\to\Xtr^j(t,T)$ and impulse $\Do\Xtr^i(t,T)\coloneqq\mathbf{e}_{i,k}$, how much of its response is explained by the pattern $\mathbf{e}_{j,k}$? 

In the following we illustrate the difference between mSSA and COF modes on the examples defined in Figure \ref{fig:example_var}. For each process we generate 10000 data points and estimate the MSSA according to \cite{ghil2002advanced} for time-window $T=100$. The $k$-th combined mSSA mode will be denoted $(\mathbf{e}_{1,k}, \mathbf{e}_{2,k}, \mathbf{e}_{3, k})$, where $1\leq k \leq 3 $. We scale each $\mathbf{e}_{i,k}$ to euclidean length one. With our framework we study the time-windowed causal effect $\Xtr^1(t,100) \to \Xtr^3(t,100)$. Specifically, we investigate what temporal patterns in $\Xtr^3(t,100)$ emerge from MSSA-based impulses $\Do \Xtr^1(t,100)\coloneqq\mathbf{e}_{1,k}$. Three aspects of the responses caused by these impulses will be examined. First, the CR pattern $\twcex{1}{3}{0}\mathbf{e}_{1,k}$ and its associated CE $\mu_k = \lVert\twcex{1}{3}{0}\mathbf{e}_{1,k}\rVert$. Secondly, the FCE $\lambda_k = \twcex{1}{3}{0}(\mathbf{e}_{1,k}, \mathbf{e}_{3,k})$ by which impulse $\mathbf{e}_{1,k}$ amplifies pattern $\mathbf{e}_{3,k}$ in $\Xtr^3(t,100)$ and finally, the causal discrepancy measure $CD_{\Delta}(\mathbf{e}_{1,k}, \mathbf{e}_{3,k})$. The last two aspects indicate to what degree the pairs $(\mathbf{e}_{1,k}, \mathbf{e}_{3,k})$ capture the causal-temporal structure of $\Xtr^1(t,100)\to \Xtr^3(t, 100)$.

In process $\mathbf{A}$ (see Figure \ref{fig:example_var}), the auto-lag structure is chosen such that both $\X^1$ and $\X^3$ generate high-frequency signals, while $\X^2$ produces a low-frequency signal. Hence $\mathbf{A}$ describes a system where the causal link $\X^1 \to \X^3$ between high-frequency process is confounded by a low-frequency process $\X^2$. The first three COF modes (the blue signals in first and second column of Figure \ref{fig: mssa_vs_cofs}) reflect $\X^3$'s sensitivity to high-frequency interventions in $\X^1$. However, the first three mSSA modes (the orange signals in first and second column of Figure \ref{fig: mssa_vs_cofs}) are dominated by the low-frequency confounding $\X^2$. Figure \ref{fig: mssa_vs_cofs} shows that the second and third pairs of MSSA modes considerably miss the causal-temporal structure. Whereas the causal discrepancy measures of these pairs express this numerically, the orange signals in the second column of Figure \ref{fig: mssa_vs_cofs}  illustrate this visually.    

An analogous analysis (see Figure \ref{fig: mssa_vs_cofs_1}) was performed on the VAR process defined by the graph $\mathbf{B}$ in Figure \ref{fig:example_var}, which is conceptually similar to $\mathbf{A}$. The difference is that the auto-lag structure of cause and effect process are such that they generate low-frequency signals, while the auto-lag structure of the confounding process generates a high-frequency signal. Here we make similar observations as in the previous example. For two of the three MSSA pairs the expected projected effects are profoundly smaller than the effects associated to the pairs of COF modes and yield substantial causal discrepancy numbers. On the other hand, the third pair of mSSA modes well represents the temporal structure of causal influence $\X^1 \to \X^3$ as the causal discrepancy value $0.007$ indicates that pattern $\mathbf{e}_{3,3}$ is an accurate approximation for the response of impulse $\mathbf{e}_{1,3}$.   
\begin{figure}[hbt!]
\begin{centering}
    \includegraphics[width=8cm]{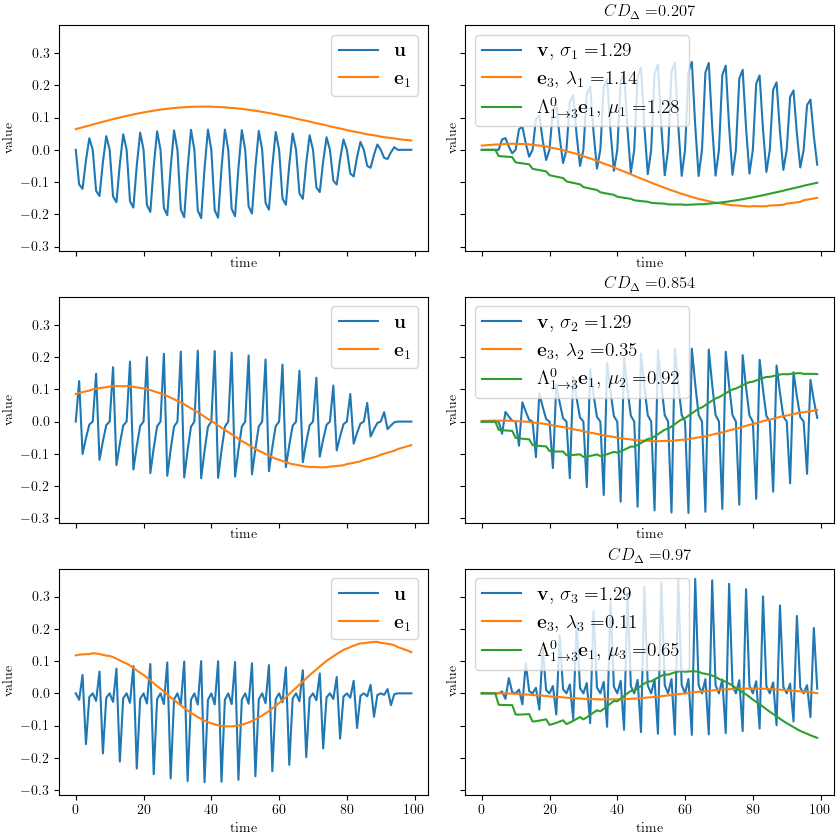}
	\caption{Each row in the plot corresponds to a pair of COF and MSSA modes for the process $\mathbf{A}$ from Figure \ref{fig:example_var}. First column: The $k$-th COF impulse $\imp_k$ (blue signal), first component $\mathbf{e}_{1,k}$ of $k$-th MSSA mode. Second column: the causal response pattern $\sigma_k \cdot\resp_k$ (blue signal) of impulse $\imp_k$, causal response $\twcex{1}{3}{0}\mathbf{e}_{1,k}$ (green signal) and $\mu_k = \lVert\twcex{1}{3}{0}\mathbf{e}_{1,k}\rVert$, the amplification of $\mathbf{e}_{3,k}$ (orange signal) due to impulse $\mathbf{e}_{1,k}$ and $\lambda_k = \langle\twcex{1}{0}{3}\mathbf{e}_{1,k}, \mathbf{e}_{3,k}\rangle$}
	\label{fig: mssa_vs_cofs}
\end{centering}
\end{figure} 
    
\begin{figure}[hbt!]
    \begin{centering}
    \includegraphics[width=8cm]{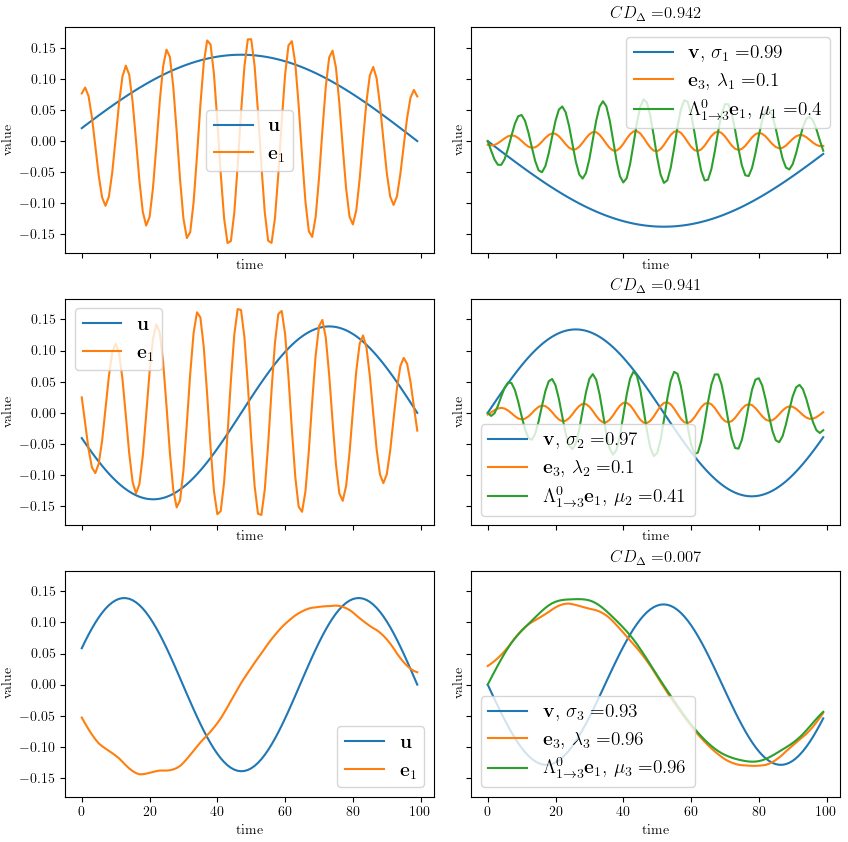}
	\caption{Analogous to the plots in Figure \ref{fig: mssa_vs_cofs_1} for the VAR process defined by $\mathbf{B}$ in Figure \ref{fig:example_var}}
	\label{fig: mssa_vs_cofs_1}
    \end{centering}
\end{figure}      

\subsection{Comparing the Fourier representation of time-windowed effects with Frequency Granger Causality}\label{sec: comparison FGC}
In this section we study time-windowed causal effects of the form $\Xtr^i(t, T)\to \Xtr^j(t, T)$, i.e. the time-windows for cause and effect process have the same size and there is no time-lag between them ($\tau = 0$). Based on the Fourier representation of its associated effect matrix $\twcex{i}{j}{0}$ we define causal effects in the frequency domain, which we then compare with frequency-based Granger Causality on a selection of bi-variate VAR processes. 

Let $\mathcal{F} \in \mathbb{C}^{T\times T}$ be the Discrete Fourier Transform (DFT) matrix and $\mathcal{F}^H$ its complex conjugate transpose. Then $\frac{1}{\sqrt{T}}\mathcal{F}$ is unitary and therefore $\frac{1}{T}\mathcal{F}\twcex{j}{i}{0}\mathcal{F}^H$ is the Fourier time-windowed causal effect matrix (FTWC). It is composed of the complex valued effects with respect to the Fourier basis. We proceed by taking entry-wise absolute values of FTWC. The $k$-th element on its diagonal is the causal effect of $\X^i$ on $\X^j$ at frequency $\omega_k = 2\pi\frac{k}{T}$, where $k\in\{0, \dots, T'\}$ and $T' = \frac{T}{2}$ if $T$ even or $T' = \frac{T-1}{2}$ otherwise. The effect at $\omega_k$ will be denoted $\mathcal{F}^{CE}_{i \to j}(\omega_k)$. It measures the average factor by which an intervention on $\X^i$ at frequency $\omega_k$ amplifies the frequency $\omega_k$ in $\X^j$. Plotting the values of $\mathcal{F}^{CE}_{i\to j}$ thus illustrates the frequency-resolved response structure of $\X^j$ with respect to interventions on $\X^i$. By construction, this does not take into account the internal dynamics of the cause process $\X^i$ (because the causal effect involves an intervention on the cause process).

Granger Causality in the frequency domain \cite{geweke1982measurement} is the classical frequency-based causality measure for VAR processes. To each frequency $0\leq\omega\leq\pi$ it associates a quantity $\mathcal{F}^{GC}_{i \to j}(\omega)$. In the following we evaluate $\mathcal{F}^{GC}$ on the same set of frequencies on which $\mathcal{F}^{CE}$ is defined and compare the plotted values.  

We will now investigate six bi-variate processes to demonstrate the difference between $\mathcal{F}^{GC}$ and $\mathcal{F}^{CE}$, where time-window length is set to $T=200$. The VAR processes we investigate parameterize the structural causal process graphs displayed in Figure \ref{fig: gc_vs_frequency_effects_vars}. We applied $\mathcal{F}^{GC}$ and $\mathcal{F}^{CE}$ to all of these processes and display the results in Figure \ref{fig: gc_vs_frequency_effects_values}.

The VAR processes corresponding to the graphs $\mathbf{A.1}$ and $\mathbf{A.2}$ differ form each other in the effect process $\X^2$ dynamics. The frequency effect measure $\mathcal{F}^{CE}$ is able to distinguish these two VAR processes. In contrast, $\mathcal{F}^{GC}$ is not able to tell them apart (see plot in first row of Figure \ref{fig: gc_vs_frequency_effects_values}). The VAR processes corresponding to the graphs $\mathbf{B.1}$ and  $\mathbf{B.2}$ distinguish from one another only in the dynamics of the cause process $\X^1$. The corresponding Frequency Granger Causality measures $\mathcal{F}^{GC}$ reflects the difference between the dynamics in $\X^1$, while $\mathcal{F}^{CE}$, as noted above, does not (see plot in second row of Figure \ref{fig: gc_vs_frequency_effects_values}). Finally, each of the VAR processes parameterized by the graphs $\mathbf{C.1}$ and $\mathbf{C.2}$ have the same dynamics in the cause and effect process. On these processes both measures $\mathcal{F}^{GC}$ and $\mathcal{F}^{CE}$ behave similarly (third row in Figure \ref{fig: gc_vs_frequency_effects_values}). 

Figure \ref{fig: gc_vs_frequency_effects_vars} and \ref{fig: gc_vs_frequency_effects_values} can be summarized as follows: If there is a causal relation form a cause to an effect process, the frequency based GC detects the cause process's dynamics, whereas frequency causal effects identify the expected frequency-resolved responsiveness of the effect process to interventions on the cause process.        
\begin{figure}[hbt!]
    \begin{centering}
    \includegraphics[width=8cm]{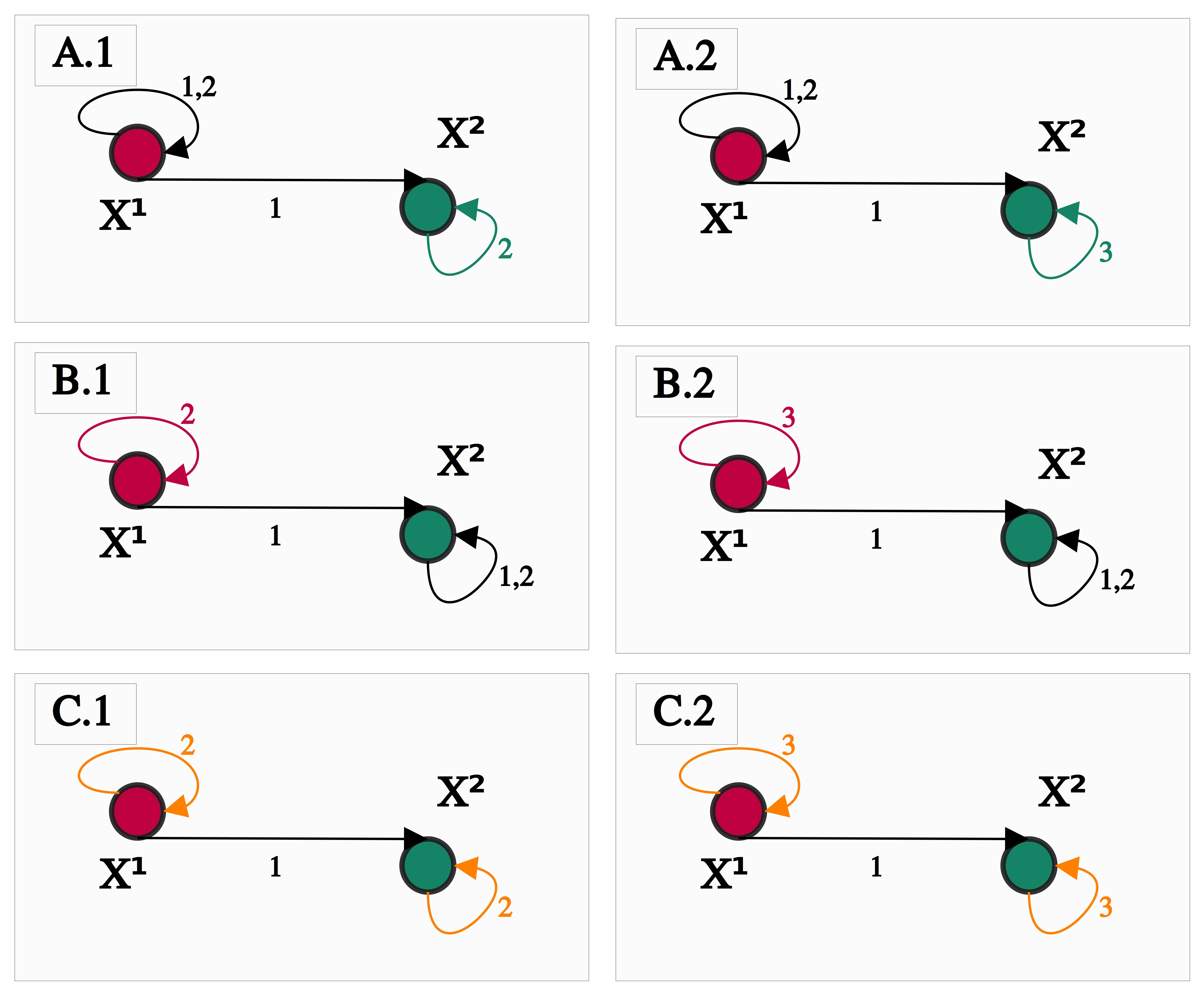}
	\caption{Six summary graphs of parameterized VAR processes. Parameters for $\mathbf{A.1}$ and $\mathbf{A.2}$: $\Phi_{\mathbf{A.1}}(1)_{1,1} = \Phi_{\mathbf{A.2}}(1)_{1,1} = 0.5$, $\Phi_{\mathbf{A.1}}(2)_{1,1} = \Phi_{\mathbf{A.2}}(2)_{1,1} = -0.8$, $\Phi_{\mathbf{A.1}}(1)_{2,1}=\Phi_{\mathbf{A.2}}(1)_{2,1} =0.25$, $\Phi_{\mathbf{A.1}}(2)_{2,2} = \Phi_{\mathbf{A.2}}(3)_{2,2} = -0.8$; Parameters for $\mathbf{B.1}$ and $\mathbf{B.2}$: $\Phi_{\mathbf{B.1}}(2)_{1,1} = \Phi_{\mathbf{B.2}}(3)_{1,1} = -0.8$, $\Phi_{\mathbf{B.1}}(1)_{2,1} = \Phi_{\mathbf{B.2}}(1)_{2,1} = 0.25$, $\Phi_{\mathbf{B.1}}(1)_{2,2} = \Phi_{\mathbf{B.2}}(1)=0.55$, $\Phi_{\mathbf{B.1}}(2)_{2,2} = \Phi_{\mathbf{B.2}}(2)_{2,2} = -0.8$; Parameters for $\mathbf{C.1}$ and $\mathbf{C.2}$: $\Phi_{\mathbf{C.1}}(2)_{1,1} = \Phi_{\mathbf{C.1}}(2)_{2,2} = -0.8$, $\Phi_{\mathbf{C.2}}(1)_{2,1} = \Phi_{\mathbf{C.2}}(1)_{2,1} = 0.25$, $\Phi_{\mathbf{C.1}}(1)_{1,1} = \Phi_{\mathbf{C.2}}(1)_{2,2} = 0.55$}
	\label{fig: gc_vs_frequency_effects_vars}
    \end{centering}
\end{figure} 
\begin{figure}[hbt!]
    \begin{centering}
    \includegraphics[width=8cm]{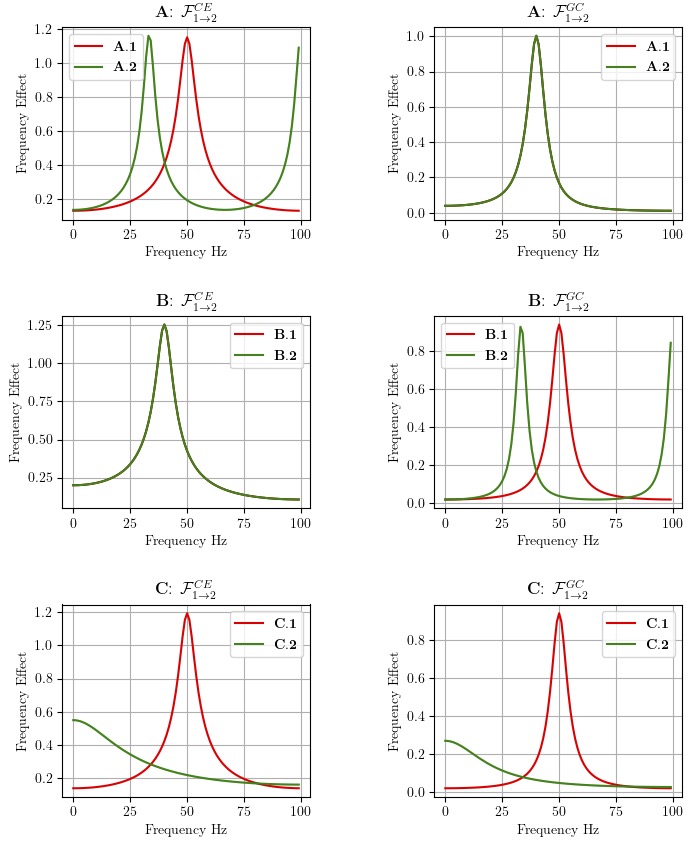}
	\caption{Left column: Values of frequency causal effects $\mathcal{F}^{CE}_{1\to 2}$ are plotted for each VAR process from Figure \ref{fig: gc_vs_frequency_effects_vars}, where the labels of the plots correspond to the indexing of the VAR processes. Right column: Analogous for frequency-based Granger Causality $\mathcal{F}^{GC}_{1 \to 2}$}
	\label{fig: gc_vs_frequency_effects_values}
    \end{centering}
\end{figure}

\section{Outlook and Discussion}
In this study we used the notion of interventions, a well established concept in causal inference, in order to define how the trajectory of a cause process influences the trajectory an effect process. Via this approach causal effects with respect to different time-series representations can be studied. This provides means to examine the causal-temporal structure in SCPs. Furthermore, we derived a representation of how one process drives another over a specified time-window. This representation compactly captures how information propagates from a cause process into and through an effect process. COFs are conceptually related to the well-known mSSA method in that COFs aim to give an efficient representation of similar linear structures, i.e the time-windowed causal effect matrix vs. the cross-covariance structure. We empirically investigated to what degree the first modes generated with the mSSA method identify the temporal structure of causal information flow. For that we considered two example VAR processes, in which the causal link of interest was confounded. The result of our analysis shows that even the strongest mSSA modes may struggle to extract patterns along which information flows from one process into and through another. Furthermore, as mSSA is purely covariance-based no directional information of influence can be extracted.

We used the Fourier representation of the time-windowed causal effect matrices to quantify the causal influence of one process on another in the frequency domain. Plotting this measure for a given cause and effect process illustrates the response structure of the latter, i.e. it indicates on which frequencies one has to intervene in the cause process in order to realize a pronounced response in the effect process. Note the clear interpretability of these effects. On the contrary, there has been debate about the "causal" interpretation of Granger Causality in the frequency domain \cite{stokes2017study}, \cite{barnett2018solved}, \cite{faes2017interpretability}, \cite{dhamala2018granger}. 
Finally, we demonstrated qualitative differences and similarities to GC in frequency domain on six different bivariate VAR processes. Our comparison suggests that the proposed frequency causal effect measure distinguishes from GC as follows: 1.) Frequency based causal effects are sensitive to the auto-lag structure in the effect process but indifferent to the auto-lag structure in the cause process, while 2.) the dynamics of the cause but not of the effect process are reflected by Frequency GC. If both cause and effect process have the same auto-lag structure, then the plots of GC and frequency effects have similar shapes.  

The methodology we are proposing distinguishes from \cite{faes2017multiscale}, where a VAR process is filtered and subsequently down-sampled producing a new process, which admits a representation as a so-called innovation state space model. Then a Granger Causality analysis, based on the GC concept developed in \cite{solo2016state}, is conducted on this process. Contrary to that, our proposed method does not aggregate on the process level, instead causal effects are aggregated, i.e. we are using the filters 1.) to define interventions and 2.) quantify their expected effects.     

Regarding assumptions relevant for causal inference, we excluded the possibility of latent processes or contemporaneous links. Studying the proposed effects after relaxing these conditions could be a matter of subsequent work. Testing the proposed estimator of time-windowed causal effect matrices for VAR processes could be subject of a systematical and detailed simulation study. This could include a comparison with different approaches towards estimating time-windowed causal effects. Furthermore, we studied how a single  cause process influences a single effect process over time (there can be more than two processes, as is also the case in some of the presented examples, but the interventions were on a single process and the effect was detected in a single process). However, the proposed framework extends naturally so that time-windowed effects between multiple cause and effect processes can be analyzed. This might be relevant in some applications. Although we examined time-windowed causal effect matrices on linear systems only we framed the concepts for general Structural Causal Processes in which causal links are parameterized by possibly non-linear functions. Hence, future work could aim at finding approaches to estimate time-windowed effects in non-linear systems and investigate what causal-temporal features can be found therein.  

\paragraph{Acknowledgements} % will be removed in pdf for initial submission,
                         % so you can already fill it to test with the
                         % ‘accepted’ class option
    This project receives funding from the European Union’s Horizon 2020 research and innovation programme under Marie Skłodowska-Curie grant agreement No 860100. Jakob Runge additionally was funded by the H2020 ERC Starting Grant CausalEarth (grant no. 948112)

\bibliography{refs}
\newpage
\null
\newpage
\appendix
% NOTE: necessary when ptmx or no mathfont class option is given
\providecommand{\upGamma}{\Gamma}
\providecommand{\uppi}{\pi}
\section{Causal inference for temporal patterns in time-series: Supplementary Material}
\subsection{Graphical preliminaries}
Let $\mathcal{G} = (V\times \mathbb{Z}, E)$ be a causal time series graph. A path in $\mathcal{G}$ is given by sequence of nodes $p = (X^{i_1}(t_1), \cdots, X^{i_k}(t_k)$ such that either $X^{i_j}(t_j)\to X^{i_{j+1}}(t_{j+1})$ or $X^{i_{j+1}}(t_{j+1})\to X^{i_{j}}(t_{j})$ is a directed edge in $\mathcal{G}$ for every $j \in \{1, \cdots, t-1\}$. The node $X^{i_1}(t_1)$ is called starting point and $X^{i_k}(t_k)$ is called end point of $p$. A path $p$ from $X^{i_1}(t_1)$ to $X^{i_k}(t_k)$ is said to be causal if for every consecutive pair $X^{i_j}(t_j)\to X^{i_{j+1}}(t_{j+1})$ is an arrow in $\mathcal{G}$. Otherwise it is called non-causal. A non-causal path $q = (X^{i_1}(t_1), \dots,Y, \cdots, X^{i_k}(t_k) )$ is called a confounding path if $(Y, \cdots, X^{i_1}(t_1))$ is a causal path from $Y$ to $X^{i_1}(t_1)$ and $(Y, \cdots, X^{i_k}(t_k))$ is a causal path from $Y$ to $X^{i_k}(t_k)$. Let $\mathbf{X}$ and $\mathbf{Y}$ be disjoint sets of nodes in $\mathcal{G}$. A path from $\mathbf{X}$ to $\mathbf{Y}$ is called proper if only its starting point lies in $\mathbf{X}$. 

Let $\mathbf{X}$ be a set of nodes, then its set of parents $pa(\mathbf{X})$ consists of nodes $X'\in V \times \mathbb{Z}$ such that $X'\to X$ is an edge in $\mathcal{G}$, where $X \in \mathbf{X}$.  
The set of its ancestors $anc(\mathbf{X}) $ consists of vertices $X' \in \mathbf{X}$ such that there is a directed path $p$ starting at $X'$ and ending at $X\in \mathbf{X}$.
The set of its descendants $des(\mathbf{X})$ consists of vertices $X' \in \mathbf{X}$ such that there is a directed path $p$ emerging from some $X\in\mathbf{X}$ and landing in $X'$.

\subsection{Causal Effects in VAR processes}

A Vector Autoregressive Process (VAR) of order $p$ is given by a multivariate stochastic process $ \{\mathbb{X}_{t}\}_{t \in \mathbb{Z}}$, where for every $t \in \mathbb{Z}$, the $N$-dimensional random vector $\mathbb{X}(t)=(X^0(t), \dots, X^{n-1}(t), Y^{0}(t),\cdots, Y^{m-1}(t) )$ satisfies the following equation
\begin{equation}\label{eq:VAR}
\mathbb{X}(t) = \sum_{k=1}^{p} \Phi(p)\mathbb{X}(t) + \eta(t),
\end{equation} 
where $\Phi(p) \in \mathbb{R}^{N\times N}$ and $\eta(t)$ is Gaussian white noise.
Before proceeding with the analytic description of causal effects in VAR processes we need to introduce more notation. Let $Y^j(t)$ be a component in $\mathbb{X}(t)$ and $\mathbf{X} = \{X^{i_1}(s_1), \cdots, X^{i_n}(s_n) \} \subset V \times \mathbb{Z}$. For the description of causal effects in linear VAR processes it will be useful to have a notation for the following sets. 
First, the set of variables which are both ancestral to $Y(t)$ and lie in $\mathbf{X}$
\begin{equation}
anc(Y^j(t))_{\mathbf{X}} = anc(Y^j(t)) \cap \mathbf{X}. 
\end{equation}
Secondly, the set of nodes which are ancestral to $Y^j(t)$ but not descendants of $\mathbf{X}$
\begin{align}
anc(Y^j(t))_{\neg \mathbf{X}} =& anc(Y^j(t)) \setminus(\mathbf{X} \cup des(\mathbf{X})).
\end{align}
Let $p = (X^{i_1}(t_1), \cdots, X^{i_2}(t_k))$ be a causal path in $\mathcal{G}$, the causal time-series graph associated to the VAR process $\mathbb{X}$. Then the weight of $p$ is 
\begin{displaymath}
w(p) = \prod_{l= 1}^{k}\Phi(t_{l+1} - t_l)_{i_{l+1}, i_l}.
\end{displaymath}
Furthermore, denote by $\mathcal{P}(\mathbf{X}, X^{i_l}(s_l), Y^{i_k}(t_k))$ the set of all proper causal paths from $\mathbf{X}$ to $Y^j(t_j)$ with starting point $X^{i_l}(s_l)$.

\begin{proposition}\label{prop: linear effects var}
	Given a VAR and its associated causal time-series graph $\mathcal{G}$, two disjoint node sets $\mathbf{X}$ and $\mathbf{Y}$ in $V \times \mathbb{Z}$, the causal effect function of $\mathbf{X} = (X^{i_1}(s_1), \cdots, X^{i_n}(s_n))$ on $\mathbf{Y} = (Y^{i_1}(t_1), \cdots, Y^{i_m}(t_m))$. 
	In addition, each variable $Y^{i_j}(t_j)$ in $\mathbf{Y}$ admits the following description: 
	\begin{equation}\label{eq: time-series causal effect}
	Y^{i_j}(t_j) = \sum_{l=1}^{n} \theta(j, l) X^{i_l}(s_l) + g^j(\mathbf{X}'') + \eta,
	\end{equation}
	where $\mathbf{X}''$ is a random-vector which components are the elements of a finite subset of $anc(Y^{i_j}(t_j))_{\neg \mathbf{X}}$, $g^j$ is a linear function and $\eta$ is Gaussian random variable. 
	The coefficient
	\begin{displaymath}
	\theta(i,l) = \sum_{p \in \mathcal{P}(\mathbf{X}, X^{i_l}(s_l), Y^{i_j}(t_j))} w(p)
	\end{displaymath}
	is given by the sum over all proper weighted paths from $\mathbf{X}$ to $Y^{i_j}(t_j)$ with starting point $X^{i_l}(s_l)$. 
	In particular the causal effect function is linear and given by the following matrix
	\begin{displaymath}
	\Theta = (\theta(j,l))_{j,l} \in \mathbb{R}^{m\times n}.
	\end{displaymath}
	In this case the causal effect function will be referred to as the causal effect matrix. 
\end{proposition}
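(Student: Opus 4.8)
The plan is to run the classical path-tracing argument for linear structural equation models inside the intervened process, while taking care that the underlying time-series graph is infinite. First I would pass to the post-intervention SCP $\mathbb{X}_{\Do(\mathbf{X}) := \mathbf{x}}$: every node $Z \notin \mathbf{X}$ still satisfies its VAR equation $Z = \sum_{W \in pa(Z)} \phi_{Z,W}\, W + \eta_Z$, where $\phi_{Z,W}$ is the corresponding entry of $\Phi(t_Z - t_W)$, whereas each $X^{i_l}(s_l) \in \mathbf{X}$ is frozen to the constant $x_l$. Starting from $Y^{i_j}(t_j)$ I would substitute recursively the structural equation of every parent that is a descendant of $\mathbf{X}$, halting the recursion at two kinds of node: those lying in $\mathbf{X}$ (each contributing a constant $x_l$) and those that are neither in $\mathbf{X}$ nor descendants of $\mathbf{X}$, which I collect into a vector $\mathbf{X}''$. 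Since the edges of the TSG respect the temporal order, every node that is simultaneously a descendant of $\mathbf{X}$ and an ancestor of $Y^{i_j}(t_j)$ has time index in the bounded interval $[\min_l s_l,\, t_j]$; hence there are only finitely many of them, the recursion terminates, and the resulting ``frontier'' satisfies $\mathbf{X}'' \subseteq anc(Y^{i_j}(t_j))_{\neg \mathbf{X}}$ and is finite. The substitution then produces an identity of the shape \eqref{eq: time-series causal effect}, in which $g^j$ is the linear map read off from the coefficients accumulated along the substitution and $\eta$ is the linear combination of the white-noise terms $\eta_Z$ picked up on the way; as a linear combination of independent Gaussians, $\eta$ is Gaussian.

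Next I would identify the coefficient of $X^{i_l}(s_l)$ in this expansion with $\sum_{p \in \mathcal{P}(\mathbf{X}, X^{i_l}(s_l), Y^{i_j}(t_j))} w(p)$. Every leaf of the substitution tree that equals $X^{i_l}(s_l)$ corresponds, read from leaf back to root, to a causal path from $X^{i_l}(s_l)$ to $Y^{i_j}(t_j)$; because the recursion stops at the first $\mathbf{X}$-node it meets, no interior node of this path lies in $\mathbf{X}$, i.e. the path is proper, and the product of edge coefficients collected along it is exactly $w(p)$. Conversely every proper causal path from $X^{i_l}(s_l)$ to $Y^{i_j}(t_j)$ is generated this way exactly once. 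To make this precise I would induct on the time index over the finite set $des(\mathbf{X}) \cap anc(Y^{i_j}(t_j))$: for $Z \notin \mathbf{X}$ the coefficient of $x_l$ in $Z$ equals $\sum_{W \in pa(Z)} \phi_{Z,W}$ times the coefficient of $x_l$ in $W$, and splitting the parents into those in $\mathbf{X}$ (which supply the length-one path) and those outside $\mathbf{X}$ (to which the inductive hypothesis applies, noting a parent with no $\mathbf{X}$-ancestor contributes $0$) reproduces the sum of $w(p)$ over proper causal paths from $\mathbf{X}$ to $Z$ starting at $X^{i_l}(s_l)$. Summing over the leaves yields $\theta(j,l)$ as stated.

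Finally I would take $\Ep[\,\cdot \mid \Do(\mathbf{X}):=\mathbf{x}]$ of \eqref{eq: time-series causal effect}. The noise terms in $\eta$ are mean zero; the components of $\mathbf{X}''$ are non-descendants of $\mathbf{X}$, so the intervention leaves their distribution unchanged, and since the VAR has zero-mean innovations the process is zero-mean, whence $\Ep[g^j(\mathbf{X}'')] = 0$. Therefore $\Ep[Y^{i_j}(t_j) \mid \Do(\mathbf{X}):=\mathbf{x}] = \sum_{l=1}^{n} \theta(j,l)\, x_l$, so the causal effect function $\mathbf{x} \mapsto \Ep[\mathbf{Y}\mid\Do(\mathbf{X}):=\mathbf{x}]$ is linear and is represented by $\Theta = (\theta(j,l))_{j,l} \in \mathbb{R}^{m \times n}$. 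I expect the main obstacle to be precisely the bookkeeping behind the word ``proper'': one must check that a causal path which re-enters $\mathbf{X}$ contributes nothing to $\theta(j,l)$ because the clamped value of the re-visited node severs its upstream segment, and one must tame the otherwise non-terminating unrolling via the finiteness observation above; the remaining steps are routine linearity and zero-mean computations.
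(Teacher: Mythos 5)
Your proposal is correct and follows essentially the same route as the paper's (very terse) proof: recursively substitute the structural equations of $Y^{i_j}(t_j)$'s ancestors, stopping at nodes in $\mathbf{X}$ or outside $des(\mathbf{X})$, and read off the coefficient of each $X^{i_l}(s_l)$ as the sum of weights of proper causal paths. You merely supply the details the paper leaves implicit — termination via the bounded time window, the bijection between substitution-tree leaves and proper paths, and the final zero-mean expectation step — all of which are sound.
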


\begin{proof}
	Recursively resolve $Y^{i_j}(t_{i_j})$ as a linear combination of its ancestors and noise terms. Distinguish between those which are in $\mathbf{X}$ and those which are not. Only continue to resolve the ancestors which are not in $\mathbf{X}$. Stop the process once this sum only contains ancestors which are either in $\mathbf{X}$ or lie in the past relative to $\mathbf{X}$.   
\end{proof} 
\paragraph{Time-windowed causal effects in VAR processes}\label{sec: time windowed causal effect var}
Let $\X$ be a VAR process specified by equations of the form \ref{eq:VAR} and with corresponding CPG $\G$. We will now express the time windowed causal effect $\twce$ in terms of the VAR-matrices $\{\Phi(p)\}_{p\in \mathbb{Z}}$. By convention we set $\Phi(p)=\mathbf{0}$, if $p > \tuamax$ or $p \leq 0$. 
\begin{equation}
\mu(p) = 
\begin{cases}
\mathbb{I} & \text{, if } p = 0 \\
\sum_{k = 1}^{p} \mu(p-k) \cdot \Phi(k) & \text{, otherwise} 
\end{cases}
\end{equation}
This is the total causal effect matrix with lag $p$, i.e. $\mu(p)_{l,k}$ is the total causal effect $\X^k(t) \to \X^l(t+p)$, according to Proposition \ref{prop: linear effects var}. Further more we define a sequence of matrices $\hat{\Phi}(p)$, for $p \in \Z$ and which entries are defined as follows
\begin{equation}
(\hat{\Phi}(p))_{l,k} = 
\begin{cases}
\Phi(p)_{l,k} & \text{, if} k \neq i \\
0 & \text{, otherwise}
\end{cases} 
\end{equation}
And finally a bi-indexed family of matrices $\Psi(s,r) \in \mathbb{R}^{N\times N}$, where $s,r$ are both non-negative integers. They are defined according to the following inductive procedure
\begin{equation}
\Psi(s, r) = 
\begin{cases}
\mathbf{0} & \text{, if } s < 0 \\
\mathbb{I} & \text{, if } s = 0\\
\sum_{k = 1}^{s} \Psi(p-k) \cdot \hat{\Phi}(k) & \text{, if } 0<s \leq r \\
\mu(s-r) \cdot \Psi(r,r) & \text{, otherwise}
\end{cases}
\end{equation}
With the matrices $\Psi(\cdot, \cdot )$ at hand we can express the causal effect matrix $\twce$ corresponding to the time-windowed causal effect $\Xtr^i(t-\tau, T_i) \to \Xtr^j(t , T_j)$, i.e, the entries of $\twce$ are given as follows 
\begin{gather*}
(\twce)_{l,k} = \\
\begin{cases}
\Psi(T_i - k + \tau - T_j + l, T_i -k)_{j,i} & \text{, if } \\ &T_i -k + \tau - T_j + l\\& > 0 \\
0 & \text{, otherwise}
\end{cases}
\end{gather*}

\subsection{Granger Causality in the frequency domain}
Let $\X$ be VAR process with VAR-matrices $\{\Phi(p)\}_{1\leq p\leq\tuamax}$, whose noise terms are serially and mutually uncorrelated, then 
\begin{equation}
	\Phi(\omega) = \mathbb{I} - \sum_{p=0}^{\tuamax} e^{-i\omega k} \Phi(p)
\end{equation}
The transfer function of the VAR process is $\mathbf{H}(\omega) = \Phi(\omega)^{-1}$ and its cross spectral density (CSPD) matrix is $\mathbf{S}(\omega) = \mathbf{H}(\omega)\mathbf{H}(\omega)^H$, where $\mathbf{H}(\omega)^{H}$ the complex conjugate transpose. Then (unconditional) GC from process $i$ to process $j$ at frequency $\omega$ is  
\begin{equation}
	\mathcal{F}_{i\to j}^{GC}(\omega) = \log(\frac{\mathbf{S}_{j,j}(\omega)}{\mathbf{H}_{i,i}(\omega)\cdot\mathbf{H}_{i,i}(\omega)^{\ast}}),
\end{equation}  
where $\mathbf{S}_{k,l}(\omega)$ is the $(k,l)$-th entry of the CSPD matrix and $\mathbf{H}_{i,i}^{\ast}$ the complex conjugate of $\mathbf{H}_{i,i}(\omega)$.

\subsection{Imposing constraints on Causal Orthogonal Functions}
Note that the COFs do not account for the dynamics within $\X^i(t, T_i)$, since we deleted all arrows pointing to a node in $\Xtr^i(t, T_i)$. This poses the problem, that although signals in $\Imp$ may efficiently capture the information flow from $\Xtr^i(t-\tau, T_i) \to\Xtr^j(t, T_j)$, they do not inherently respect, whether these signals occur in a realization of $\X$. In that sense we might want the causal orthogonal functions to reside in the space of signals, which is best covered by observations. Alternatively, we might want to understand how $\X^j$ and $\X^i$ are interacting with respect to a specified pair of frequency bands, i.e., we might want to restrict the support of the power spectral density of the COFs. Finally, we might want the COFs to jointly adhere to a number of constraints. 
\paragraph{General Case}
Let $\mathbf{P} \in \mathbb{R}^{T_j \times k}$ and $\mathbf{Q}^{T_i \times l}$ be matrices with orthonormal columns, i.e. $\mathbf{P}^T \cdot \mathbf{P} = \mathbb{I} \in \mathbb{R}^{k \times k}$ and $\mathbf{Q}^T \cdot \mathbf{Q} = \mathbb{I} \in \mathbb{R}^{l\times l}$. Then we may form the following: 
\begin{displaymath}
\lambda' = \mathbf{Q}^T D\twce(\mathbf{x}) \mathbf{P} \in \mathbb{R}^{k \times l}.
\end{displaymath}
Per forming SVD on $\lambda'$ gives
\begin{displaymath}
\Sigma' = (\Resp')^T \lambda' \Imp',  
\end{displaymath}
Then, the COFs which respect the conditions encoded by $\mathbf{P}$ and $\mathbf{Q}$ are then given by the columns of the matrices
\begin{align*}
\Imp &= \mathbf{P} \cdot \Imp' \\
\Resp &= \mathbf{Q} \cdot \Resp' 
\end{align*}
Finally, assume we would like the COFs to satisfy several conditions simultaneously, encoded by the column-wise orthonormal matrices $\mathbf{P}_1 \in \mathbb{R}^{T_j \times k_1}, \mathbf{P}_2 \in \mathbb{R}^{T_j \times k_2}$ on the impulse side, and by $\mathbf{Q}_1\in \mathbb{R}^{T_i \times l_1}, \mathbf{Q}_2 \in \mathbb{R}^{T_i \times l_2}$ on the response side. Incorporating these conditions into the causal effect analysis of $\Xtr^i(t-\tau, T_i) \to \Xtr^j(t, T_j)$ yields the matrix
\begin{equation}
\lambda'' = \mathbf{Q}_2^T\mathbf{Q}_1\mathbf{Q}_1^T D\twce(\mathbf{x}) \mathbf{P}_1 \mathbf{P}_1^T \mathbf{P}_2
\end{equation}
Let the SVD of $\lambda''$ take the form 
\begin{equation}
\Sigma'' = (\Resp'')^T \lambda'' \Imp''.
\end{equation}
Hence, the COFs, constrained by the conditions $\mathbf{P}_1, \mathbf{P}_2$ on the impulse side and by $\mathbf{Q}_1, \mathbf{Q}_2$ on the response side, are given by 
\begin{align*}
\Imp &= \mathbf{P}_2\mathbf{P}_2^T\mathbf{P}_1 \Imp'' \\
\Resp &= \mathbf{Q}_2 \mathbf{Q}_2^T \mathbf{Q}_1 \Resp	
\end{align*} 
\paragraph{Investigate causal interactions between the dominant modes}
Another method to extract temporal features from a time-series is Singular Spectrum Analysis (SSA). This type of time-series analysis extracts temporal modes from the covariance matrix of the trajectory process $\Xtr^i$
\begin{equation}
\C_i(t, T) = \begin{pmatrix}
\Cov(X(t-T+i), X(t-T+j))
\end{pmatrix}_{i,j} \in \mathbb{R}^{T\times T}.
\end{equation}
As the covariance is symmetric, it is diagonalizable and therefore admits the following decomposition
\begin{equation}
\C_i(t,T) = \mathbf{E}_i^T \Xi \mathbf{E}_i, 
\end{equation}
where $\Xi$ is a diagonal matrix and $\mathbf{E}$ is an orthogonal matrix, which columns are uncorrelated temporal modes, explaining decreasingly much of the co-variance in $\X^i$ over a time-window of length $T$. When investigating cause-effect relations in a VAR process $\X$ in the form of $\X^i(t-\tau, T) \to X^j(t, T)$ we can use the features extracted with SSA in order to investigate causal interactions between the most dominant temporal patterns in $\Xtr^i(t-\tau, T)$ and $\Xtr^j(t, T)$, and finally detect new pairs of modes which are composed of just those and compactly represent the causal relationship
\begin{equation}
\mathcal{S} = (\mathbf{E}_i')^T D\twce(\mathbf{x}) \mathbf{E}_j',
\end{equation} 
here $\mathbf{E}_j'\in\mathbb{R}^{T\times k_j}$ and $\mathbf{E}_i' \in \mathbb{R}^{T\times k_i}$ contain a selection of columns of $\mathbf{E}_j$ resp. $\mathbf{E}_i$. These could, for example, be the modes which explain a certain percentage of the variance, thereby restricting the causal analysis to the spaces which will be best covered by data. Performing a SVD on $\mathcal{S}$ will yield COFs residing in these spaces. 
\begin{equation}
\mathcal{S} = (\Resp_i')^T \Sigma \Imp_j'.
\end{equation}
As already seen above, the COFs can now be determined from $\mathbf{E}_i'$ and $\Resp_i'$, resp. $\mathbf{E}_j'$ and $\Imp_j'$.
\end{document}